\theoremstyle{definition}
\newtheorem{theorem}{Theorem}
\newtheorem{corollary}{Corollary}
\newtheorem{lemma}{Lemma}
\newcounter{example}[section]
\title{Perturbed potential temperature distribution\\ in atmospheric boundary layers}
\author{\normalsize Natanael Karjanto\thanks{\Letter: \texttt{natanael@skku.edu} \href{https://orcid.org/0000-0002-6859-447X}{\includegraphics[scale=0.08]{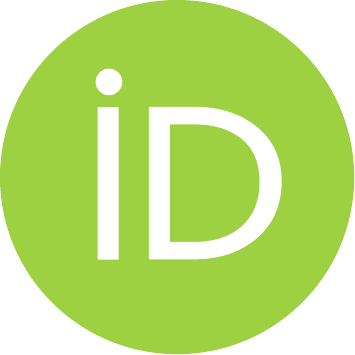}}}}
\affil{Department of Mathematics, University College, Natural Science Campus\\ Sungkyunkwan University, Suwon~16419, Republic of Korea}
\date{\vspace*{-0.5cm} \scriptsize Updated \today}
\begin{document}
\maketitle
\begin{abstract}
\begin{center}
{\bfseries Abstract}
\end{center}
\noindent
This article discusses the modeling of perturbed potential temperature in an atmospheric boundary layer. We adopt a convection-diffusion model with specified initial and boundary conditions that resulted from simplifying the linearized equation of the standard continuity equation for potential temperature field in the state of weak turbulent fluxes. By implementing the method of separation of variables to the non-steady-state perturbed potential temperature, we obtain a regular Sturm-Liouville problem for the spatial-dependent, vertical distribution component of the perturbed potential temperature. By transforming the canonical problem into the Liouville normal form, we provide asymptotic solutions for the corresponding second-order boundary value problem using the WKB theory. Furthermore, by solving the problem numerically, we observe a remarkable qualitative agreement between the asymptotic solutions and numerical simulations. As other convection-diffusion models typically perform, the perturbed potential temperature diminishes and approaches the steady-state condition over time.
\end{abstract}

\section{Introduction}
Although both the atmosphere and ocean play an important role in the variation of the climate system, it is the former that affects and interacts directly with human beings. Indeed, we influence and our lives are influenced by the lowest part of the atmosphere, known as the \emph{atmospheric} or \emph{planetary boundary layer} (ABL/PBL), where its characteristics are also strongly influenced by its contact with the Earth's surface. Hence, our understanding of how it works will help not only in short-term atmospheric conditions and local weather forecasts but also in long-term changes of climate predictions. 

Mathematical modeling of ABL involves fundamental aspects of geophysical fluid dynamics. Equations of motion cover the mass continuity and momentum equations for the atmospheric flow, which usually comprises velocity and potential temperature. In their influence paper on ABL, Owinoh et al. (2005) presented theoretical, computational, and experimental studies on the effects of changing surface heat flux on the flow and demonstrated the complexities in this transition process~\cite{owinoh2005effects}. In particular, they proposed a simplified model for perturbed layers by implementing scaling analysis to the standard Navier-Stokes equations for conservation of mass and momentum of Newtonian fluids. It follows that the perturbed potential temperature can be governed by a convection-diffusion model with appropriate initial and boundary conditions. 

The simplified two-dimensional model reads
\begin{equation*}
\frac{\partial \theta}{\partial t} - \frac{\partial }{\partial z} \left(\kappa u_\ast z \frac{\partial \theta}{\partial z} \right) = 0, \qquad z > 0, \qquad t > 0,
\end{equation*}
subject to boundary conditions far above the surface and at the surface layer, respectively
\begin{align*}
\theta(z,t) &= 0, \qquad \text{as} \quad z \to \infty \\
F_\theta(z_0,t) &= \left\{
\begin{array}{rc}
0, 	 & \qquad t < 0, \\
F_s, & \qquad t > 0.
\end{array}
\right. 
\end{align*}
Here, $\theta(z,t)$ denotes the potential temperature perturbation, $z$ is the vertical coordinate, $t$ is time, $z_0 > 0$ is the surface roughness length, $\kappa > 0$ is the von K\'{a}rm\'{a}n constant, a dimensionless constant involved in the logarithmic law describing the distribution of the longitudinal velocity in the wall-normal direction of a turbulent fluid flow near a boundary with a no-slip condition ($\kappa \approx 0.40 \pm 0.02$), $u_\ast$ is the friction velocity that depends on the shear stress $\tau_w$ at the boundary of the low, i.e., $u_\ast = \sqrt{\tau_w/\rho}$, with $\rho$ being the fluid density,  
\begin{equation*}
F_\theta(z,t) = - \kappa u_\ast z \frac{\partial \theta}{\partial z},
\end{equation*}
and $F_s$ is the constant perturbation surface heat flux initiated at $t = 0$.

By implementing a similarity transformation, i.e., $\zeta = z/(\kappa u_\ast t)$, and converting $\theta$ to a dimensionless quantity, i.e.,
\begin{equation*}
\theta^\ast(\zeta) = \frac{\kappa u_\ast}{F_s} \theta(z,t)
\end{equation*}
Owinoh et al.~\cite{owinoh2005effects} reduced the partial differential equation (PDE) to an ordinary differential equation (ODE) with appropriate boundary conditions (notation typos have been corrected): 
\begin{align*}
\zeta \frac{d^2\theta^\ast}{d\zeta^2} + (\zeta + 1) \frac{d\theta^\ast}{d\zeta} = 0 & \\
\zeta \frac{d \theta^\ast}{d \zeta} = -1 \qquad \text{as} \quad \zeta \to \frac{Z_0}{\kappa u_\ast t} \quad \text{at the surface} &\qquad \qquad \text{and} \qquad \qquad 
\theta^\ast \to 0 \qquad \text{as} \quad \zeta \to \infty.
\end{align*}
The solution of the corresponding boundary value problem above can be expressed in terms of the exponential integral function $E_1$:
\begin{equation*}
\theta(\zeta) = \frac{F_s}{\kappa u_\ast} E_1(\zeta), \qquad \text{where} \qquad E_1(\zeta) = \int_{\zeta}^{\infty} \frac{e^{-\zeta'}}{\zeta'} \, d\zeta'.
\end{equation*}
At the top of the thermal layer, the perturbed potential temperature profile can be described by the ratio between an exponential function and its argument:
\begin{equation*}
\theta(\zeta) = \frac{F_s}{\kappa u_\ast} \frac{e^{-\zeta}}{\zeta} \qquad \text{for} \quad \zeta \gg 1 \qquad \text{or} \qquad
\theta(z,t) = \frac{F_s t}{z} e^{-\frac{z}{\kappa u_\ast t}} \qquad \text{for} \quad z \gg t.
\end{equation*}
At the Earth's surface, $E_1(\zeta) \approx -\left(\gamma + \ln \zeta \right)$ as $\zeta \to 0$, and thus the surface perturbed potential temperature $\theta_s$ is given by
\begin{equation*}
\theta_s(\zeta) = \frac{-F_s}{\kappa u_\ast} \left(\gamma + \ln \zeta \right) \qquad \text{or} \qquad
\theta_s(z_0,t) = \frac{F_s}{\kappa u_\ast} \left[\ln \left(\frac{\kappa u_\ast t}{z_0} \right) - \gamma \right] \qquad \text{or} \qquad
\end{equation*}
Near the ground, the perturbed potential temperature profile is also expressed as the usual logarithmic profile:
\begin{equation*}
\theta(z,t) = \theta_s(z_0,t) - \frac{F_s}{\kappa u_\ast} \ln \left(\frac{z}{z_0}\right) = \frac{F_s}{\kappa u_\ast} \left[\ln \left(\frac{\kappa u_\ast t}{z}\right) - \gamma \right],
\end{equation*}
where $\gamma \approx 0.5772156649$ is the Euler-Mascheroni constant. It is defined as the limiting difference between the harmonic series and the natural logarithm:
\begin{equation*}
\gamma = \lim\limits_{n \to \infty} \left(-\ln n + \sum_{k = 1}^{n} \frac{1}{k} \right) = \int_{1}^{\infty} \left(-\frac{1}{x} + \frac{1}{\lfloor x \rfloor }\right) \, dx,
\end{equation*}
where $\lfloor x \rfloor$ denotes the floor function.

This article is organized as follows. After this introduction, Section~\ref{model} proposes a mathematical model that governs the perturbed potential temperature field in the atmospheric boundary layer. It follows that the spatial-dependent variable reduces to a regular Sturm-Liouville eigenvalue boundary value problem and we also derive its transformation from the canonical form to the Liouville normal form. Section~\ref{asymptotic} discusses asymptotic solutions of the eigenvalue problem using the WKB theory. We also discuss the case for a single turning point and several possible possibilities in estimating eigenvalues. Section~\ref{numerical} covers some findings of numerical simulations for the Sturm-Liouville problem and provides qualitative comparisons with the WKB asymptotic solutions. Finally, Section~\ref{conclusion} concludes our discussion.

\section{Mathematical model}	\label{model}

In this article, we focus on the following initial-boundary value problem (IBVP) for the perturbed potential temperature in an atmospheric boundary layer:
\begin{align}
\frac{\partial \theta}{\partial t} = \frac{\partial}{\partial z} \left(u(z) \frac{\partial \theta}{\partial z} \right),& 
\qquad \qquad t > 0, \quad 0 \leq z_0 < z < z_1, 		\label{pde-theta} \\	
\theta(z,0) = \psi(z),&  \qquad \qquad z_0 < z < z_1, 	\label{ic-theta} \\
a_0 \theta(z_0,t) - a_1 \frac{\partial \theta}{\partial z}(z_0,t) = c_1,& \qquad \qquad t > 0, \qquad a_0^2 + a_1^2 > 0, 	\label{bc1-theta}	\\
b_0 \theta(z_1,t) + b_1 \frac{\partial \theta}{\partial z}(z_1,t) = c_2,& \qquad \qquad t > 0, \qquad b_0^2 + b_1^2 > 0.	\label{bc2-theta}
\end{align}

We observe that the governing equation for $\theta(z,t)$ obeys the convection-diffusion model. While Owinoh et al.~\cite{owinoh2005effects} consider a linear wind speed profile, we adopt a generalized version of it. Hence, $u(z)$ describes the vertical distribution of horizontal mean wind speeds within the lowest portion of the atmospheric boundary layer. We assume that $u(z) \neq 0$ for all $z \geq 0$. This model is similar to the heat conduction problem in a rod whose material properties, e.g., the thermal conductivity, vary with the position~\cite{haberman2013applied,agarwal2009ordinary}. For the spatial dependent $u$, i.e., $u = u(z)$, the PDE is still linear. But if $u$ is temperature-dependent, i.e., $u = u(\theta)$, we have a nonlinear PDE. 

For the special case where the horizontal mean wind speed is constant, the PDE above reduces to the well-known classical heat equation, where the analog of this quantity in the heat problem model is usually called the \emph{thermal conductivity}. Furthermore, since thermal diffusivity is defined as the thermal conductivity divided by the product of the specific heat and mass density, we may also regard the case of constant wind speed being analog as the case of constant \emph{thermal diffusivity} in the classical heat equation problem, as long as the specific heat and mass density are also constant. In the context of diffusion of a chemical pollutant, the thermal conductivity is often called \emph{chemical diffusivity}.

While Owinoh et al.~\cite{owinoh2005effects} stated that their convection-diffusion model is valid in the inner part of the surface layer where $z/h \lesssim 1/10$, where $h$ is the boundary layer depth, their self-similar solution seems to extend not only to the outer part of the boundary layer, where $z/h \gtrsim 1/10$, but also to sufficiently large vertical position, i.e., $z \to \infty$. Indeed, they also incorporate the boundary condition far above the surface when solving the BVP, where the potential temperature field vanishes. In our case, we restrict to a limited interval of the boundary layer, i.e., $z_0 < z < z_1$, and prescribed the most generalized conditions at these boundaries, i.e., the third-kind (also known as Robin) conditions. Furthermore, while an initial condition for the potential temperature is expressed explicitly in our model, the initial condition seems to be buried inside the boundary condition for the perturbation surface heat flux in their problem formulation.

\subsection{Sturm-Liouville problem for non-steady-state perturbed potential temperature}

Let
\begin{equation*}
\theta(z,t) = \overline{\theta}(z) + \widehat{\theta}(z,t),
\end{equation*}
where $\overline{\theta}$ and $\widehat{\theta}$ denote the steady-state and non-steady-state perturbed potential temperature fields, respectively, then the former satisfies the following BVP:
\begin{align}
\frac{d}{dz} \left(u(z) \frac{d\overline{\theta}}{dz}\right) &= 0, \qquad z_0 < z < z_1, 	\label{steady}	\\
a_0 \overline{\theta}(z_0) - a_1 \frac{d\overline{\theta}}{dz}(z_0) &= c_1, 	\label{robin1} \\
b_0 \overline{\theta}(z_1) + b_1 \frac{d\overline{\theta}}{dz}(z_1) &= c_2.	\label{robin2}
\end{align} 
The ODE~\eqref{steady} solves
\begin{equation}
\overline{\theta}(z) = A \int_{z_0}^{z} \frac{d\zeta}{u(\zeta)} + B, \qquad A, B \in \mathbb{R}			\label{steady-solution}
\end{equation}
and satisfies the boundary conditions~\eqref{robin1} and~\eqref{robin2} if and only if 
\begin{align*}
a_0 B - a_1 \frac{A}{u(z_0)} &= c_1 \\
b_0 \left(A \int_{z_0}^{z_1} \frac{d\zeta}{u(\zeta)} + B \right) + b_1 \frac{A}{u(z_1)} &= c_2.
\end{align*}
Additionally, the BVP~\eqref{steady}--\eqref{robin2} has a unique solution if and only if the following condition is satisfied:
\begin{equation*}
\frac{a_1 b_0}{u(z_0)} + \frac{a_0 b_1}{u(z_1)} + a_0 b_0 \int_{z_0}^{z_1} \frac{d\zeta}{u(\zeta)} \neq 0.
\end{equation*}

On the other hand, the non-steady-state perturbed potential temperature $\widehat{\theta}$ satisfies the following IBVP:
\begin{align}
\frac{\partial \widehat{\theta}}{\partial t} = \frac{\partial}{\partial z} \left(u(z) \frac{\partial \widehat{\theta}}{\partial z} \right),& 
\qquad \qquad t > 0 \qquad z_0 < z < z_1, \label{nonsteady-perturbed-potential-ivbp} \\
\widehat{\theta}(z,0) = \widehat{\psi}(z) \equiv \psi(z) - \overline{\theta}(z),& \qquad \qquad z_0 < z < z_1, \label{nonsteady-ic} \\
a_0 \widehat{\theta}(z_0,t) - a_1 \frac{\partial \widehat{\theta}}{\partial z}(z_0,t) = 0,& \qquad \qquad t > 0, \qquad a_0^2 + a_1^2 > 0, \label{nonsteady-robin1}  \\
b_0 \widehat{\theta}(z_1,t) + b_1 \frac{\partial \widehat{\theta}}{\partial z}(z_1,t) = 0,& \qquad \qquad t > 0, \qquad b_0^2 + b_1^2 > 0.  \label{nonsteady-robin2}
\end{align}
Applying the method of separation of variables, we express $\widehat{\theta}(z,t) = y(z) \tau(t) \neq 0$. Substituting into the IVBP~\eqref{nonsteady-perturbed-potential-ivbp}--\eqref{nonsteady-robin2}, it yields the following relationship:
\begin{align*}
y(z) \tau'(t) &= \frac{d}{dz} \left(u(z) \frac{dy}{dz} \right) \tau(t) \\
\frac{\tau'(t)}{\tau(t)} &= \frac{1}{y(z)} \frac{d}{dz} \left(u(z) \frac{dy}{dz} \right) = -\lambda, \qquad \lambda \in \mathbb{R}.
\end{align*}
The price for separating the variables is one PDE in an IVBP becomes two ODEs, one in an IVP, the other in a BVP. They are given as follows, respectively:
\begin{align}
-\frac{d\tau}{dt} &= \lambda \tau, \qquad \qquad &\tau(0) &= \widehat{\psi}(z),  \\
-\frac{d}{dz} \left[u(z) \frac{dy}{dz} \right] &= \lambda y, \qquad 
&a_0 y(z_0) - a_1 \frac{dy}{dz}(z_0) &= 0, \qquad \qquad b_0 y(z_1) + b_1 \frac{dy}{dz}(z_1) = 0. \label{bvpfory}
\end{align}

Observe that the BVP~\eqref{bvpfory} is a Sturm-Liouville eigenvalue problem with infinite number of nonnegative eigenvalues $\lambda_n$, $n \in \mathbb{N}_0$, i.e., $0 \leq \lambda_0 < \lambda_1 < \cdots$. For each eigenvalue $\lambda_n$, there exists a unique eigenfunction $y_n(z)$, which forms an orthogonal set of eigenfunctions $\left\{y_n(z) \right\}$ in $(z_0,z_1)$, i.e.,
\begin{equation*}
\int_{z_0}^{z_1} y_m(z) y_n(z) \, dz = 0, \qquad \text{whenever} \; m \neq n.
\end{equation*}
For $\lambda = \lambda_n$, the ODE for $\tau$ becomes $\frac{d\tau_n}{dt} + \lambda_n \tau_n = 0$, which gives
\begin{equation*}
\tau_n(t) = C e^{-\lambda_n t}, \qquad n \geq 0, \qquad C \in \mathbb{R}.
\end{equation*} 
The solution of the non-steady state IVBP can be expressed as 
\begin{equation*}
\widehat{\theta}(z,t) = \sum_{n = 0}^{\infty} a_n y_n(z) e^{-\lambda_n t}.
\end{equation*}
Using the initial condition, we obtain
\begin{equation}
a_n = \frac{\displaystyle \int_{z_0}^{z_1} \widehat{\psi}(z) y_n(z) \, dz}{\displaystyle \int_{z_0}^{z_1} y_n^2(z) \, dz}, \qquad n \in \mathbb{N}_0. 	\label{coeff}
\end{equation}
The solution of the perturbed potential temperature reads
\begin{equation}
\theta(z,t) = \overline{\theta}(z) + \widehat{\theta}(z,t) 
= A \int_{z_0}^{z} \frac{d\zeta}{u(\zeta)} + B + \sum_{n = 0}^{\infty} a_n y_n(z) e^{-\lambda_n t},		\label{steady-nonsteady}
\end{equation}
where $A, B \in \mathbb{R}$.

We have the following corollary~\cite{agarwal2009ordinary}.
\begin{corollary}		\label{corol}
From the expression for $\theta(z,t)$~\eqref{steady-nonsteady}, we obtain the following characteristics:
\begin{enumerate}[leftmargin=1em]
\item Since each $\lambda_n > 0$, then $\theta(z,t) \to \overline{\theta}(z)$ as $t \to \infty$.
\item For any $t = \tau > 0$, the series for $\theta(z,\tau)$ converges uniformly in $z_0 \leq z \leq z_1$ because of the exponential factors. Hence, $\theta(z,\tau)$ is a continuous function in $z$.
\item For large $t$, we may approximate the perturbed potential temperature by the term containing the lowest-order eigenvalue only, i.e.,
\begin{equation*}
\overline{\theta}(z) + a_0 y_0(z) e^{-\lambda_0 t}.
\end{equation*}
\end{enumerate}
\end{corollary}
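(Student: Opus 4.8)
The plan is to treat all three parts as consequences of the eigenfunction expansion~\eqref{steady-nonsteady} together with two structural facts from regular Sturm-Liouville theory: first, that the eigenvalues grow without bound (in fact $\lambda_n \sim C n^2$ by the Weyl asymptotics for a regular problem on a bounded interval), and second, that the normalized eigenfunctions $y_n$ together with the expansion coefficients $a_n$ from~\eqref{coeff} are controlled uniformly in $n$ provided $\widehat{\psi}$ is square-integrable on $(z_0,z_1)$. These two facts are what convert the formal series into an analytically tractable object, so I would record them at the outset before addressing the individual claims.

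For part~2 I would invoke the Weierstrass $M$-test on the interval $z_0 \le z \le z_1$. Fixing $t = \tau > 0$, each summand is bounded by $M_n = |a_n|\,\|y_n\|_\infty\, e^{-\lambda_n \tau}$; using the boundedness of $|a_n|\,\|y_n\|_\infty$ and the super-polynomial decay of $e^{-\lambda_n \tau}$ forced by $\lambda_n \to \infty$, the majorant series $\sum_n M_n$ converges. Uniform convergence then follows, and continuity of $\theta(\cdot,\tau)$ is inherited term-by-term since $\overline{\theta}$ and each $y_n$ are continuous. I expect this to be the main obstacle: the estimate is elementary only once the eigenvalue growth and the coefficient bound are in hand, and a fully rigorous treatment must cite or reprove those, for instance via the minimax characterization of $\lambda_n$ and Bessel's inequality for the coefficients $a_n$.

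Part~1 then follows by passing to the limit $t \to \infty$ inside the series. Because the convergence in part~2 can be upgraded to local uniformity in $t$ on $[\tau_0,\infty)$ for any $\tau_0 > 0$, the limit and the summation may be interchanged, so that $\widehat{\theta}(z,t) \to 0$ termwise (each $e^{-\lambda_n t} \to 0$) and hence $\theta(z,t) \to \overline{\theta}(z)$. Here I would flag the mild tension with the ordering $0 \le \lambda_0 < \lambda_1 < \cdots$ stated earlier: the conclusion requires $\lambda_0 > 0$ strictly, which holds for the Robin data under an appropriate sign condition on the coefficients, and I would state that hypothesis explicitly rather than let it remain implicit.

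Finally, part~3 is a refinement of part~1 obtained by factoring out the slowest-decaying mode. Writing $\widehat{\theta}(z,t) = a_0 y_0(z) e^{-\lambda_0 t} + e^{-\lambda_0 t}\sum_{n \ge 1} a_n y_n(z) e^{-(\lambda_n - \lambda_0)t}$, the tail is bounded by $e^{-(\lambda_1 - \lambda_0)t}$ times a series that converges by the part~2 estimate, so it is smaller than the leading term by a factor tending to zero. This justifies the large-time approximation $\theta(z,t) \approx \overline{\theta}(z) + a_0 y_0(z) e^{-\lambda_0 t}$, with relative error $O\!\big(e^{-(\lambda_1 - \lambda_0)t}\big)$ governed by the spectral gap $\lambda_1 - \lambda_0$.
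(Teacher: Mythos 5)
Your argument is correct, but there is nothing in the paper to compare it against: the paper does not prove this corollary at all --- it is asserted as an immediate consequence of the expansion~\eqref{steady-nonsteady} with a citation to the textbook of Agarwal and O'Regan~\cite{agarwal2009ordinary}. Your write-up therefore supplies the argument the paper leaves implicit, and it is the standard one: the Weierstrass $M$-test with majorant $M_n = |a_n|\,\|y_n\|_\infty\, e^{-\lambda_n\tau}$ for part~2, where convergence of $\sum_n M_n$ rests on $\lambda_n\to\infty$ (Weyl's law $\lambda_n\sim Cn^2$ for the regular problem with $u>0$) together with Bessel's inequality for the $a_n$ and boundedness of the normalized eigenfunctions; uniform-in-$t$ majorization on $[\tau_0,\infty)$ to justify the termwise limit in part~1; and the spectral-gap factorization giving error $O\!\left(e^{-(\lambda_1-\lambda_0)t}\right)$ in part~3. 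All of these steps are sound, and the one ingredient you lean on without proof --- the uniform sup-norm bound on normalized eigenfunctions --- is both standard (via the Liouville normal form asymptotics) and, as your own estimate shows, dispensable: even a polynomially growing bound on $\|y_n\|_\infty$ would be absorbed by the exponential factors. Two of your observations genuinely improve on the paper. First, you correctly flag the inconsistency between the earlier claim $0\le\lambda_0<\lambda_1<\cdots$ and the corollary's hypothesis ``each $\lambda_n>0$'': in the pure Neumann case $a_0=b_0=0$ one has $\lambda_0=0$ with constant eigenfunction, and then $\theta(z,t)\to\overline{\theta}(z)+a_0y_0(z)\neq\overline{\theta}(z)$ in general, so part~1 requires a sign condition on the Robin coefficients (e.g.\ $a_0a_1\ge0$, $b_0b_1\ge0$, with $a_0,b_0$ not both zero) that the paper never states. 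Second, your quantitative bound in part~3 makes precise what the paper phrases only as ``for large $t$ we may approximate''; the single caveat worth adding is that the \emph{relative}-error reading presumes $a_0\neq0$, otherwise only the absolute estimate on the tail survives.
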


\subsection{Transformation to Liouville normal form}

Consider again the Sturm-Liouville problem~\eqref{bvpfory} in the \emph{standard} or \emph{canonical form} with eigenvalue $\lambda$ and the corresponding eigenfunction $y(z)$. We have the following lemmas.
\begin{lemma}	\label{lemma1}
The eigenvalue problem~\eqref{bvpfory} can be converted to another BVP in the following form by transforming the independent variable $z$ to $\widehat{z}$:
\begin{align*}
-\frac{d}{d\widehat{z}}\left(\frac{u}{\dot{z}} \, \frac{dy}{d\widehat{z}} \right) = \lambda \dot{z} \, y, & \qquad \qquad z_0 < z < z_1 \\
a_0 y(z_0) - \frac{a_1}{\dot{z}(\widehat{z}_0)} \frac{dy}{dz}(z_0) = 0, & \qquad \qquad b_0 y(z_1) + \frac{b_1}{\dot{z}(\widehat{z}_1)} \frac{dy}{dz}(z_1) = 0.
\end{align*}
where the dot represents the derivative with respect to $\widehat{z}$, i.e., $\dot{z} = dz/d\widehat{z}$, $y = y(z(\widehat{z}))$, and $u/\dot{z} = u(z(\widehat{z}))/\dot{z}(\widehat{z})$. 
\end{lemma}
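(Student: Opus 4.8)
The plan is to treat the statement as a change of the independent variable and to verify it directly by the chain rule, so the content is essentially computational. First I would fix an admissible transformation $z = z(\widehat{z})$, requiring it to be $C^2$ and strictly monotone on the interval so that $\dot{z} = dz/d\widehat{z}$ never vanishes; this guarantees that the inverse $\widehat{z} = \widehat{z}(z)$ exists and that the endpoints correspond as $z_0 \leftrightarrow \widehat{z}_0$ and $z_1 \leftrightarrow \widehat{z}_1$. Non-vanishing of $\dot{z}$ is precisely what keeps the quotients $u/\dot{z}$, $a_1/\dot{z}$, and $b_1/\dot{z}$ well defined and, together with the standing assumption $u(z) \neq 0$, preserves the regularity of the problem.

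The core step is the operator identity from the chain rule. Writing $\frac{d}{dz} = \frac{1}{\dot{z}}\frac{d}{d\widehat{z}}$, and hence $\frac{dy}{dz} = \frac{1}{\dot{z}}\frac{dy}{d\widehat{z}}$, I would substitute into the left-hand side of the canonical equation $-\frac{d}{dz}\!\left(u\,\frac{dy}{dz}\right) = \lambda y$ to obtain
\[
-\frac{1}{\dot{z}}\,\frac{d}{d\widehat{z}}\!\left(\frac{u}{\dot{z}}\,\frac{dy}{d\widehat{z}}\right) = \lambda y .
\]
Multiplying through by $\dot{z}$ clears the outer prefactor and yields the claimed divergence-form equation, in which $\dot{z}$ has become the new weight function multiplying $\lambda y$ on the right-hand side.

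For the boundary data I would proceed identically, replacing $\frac{dy}{dz}$ at each endpoint by $\frac{1}{\dot{z}}\frac{dy}{d\widehat{z}}$ evaluated at $\widehat{z}_0$ and $\widehat{z}_1$, so that the Robin conditions acquire the weights $1/\dot{z}(\widehat{z}_0)$ and $1/\dot{z}(\widehat{z}_1)$ on their derivative terms exactly as stated, while $a_0$ and $b_0$ remain untouched. Since $\dot{z} \neq 0$, the transformed coefficients still satisfy $a_0^2 + (a_1/\dot{z})^2 > 0$ and the analogous condition at $z_1$, so the transformed problem is again a genuine Robin problem. I do not anticipate a real obstacle, as everything reduces to the chain rule; the only point needing care is the admissibility of the map — that $\dot{z}$ is nonzero and that orientation is tracked so each endpoint is sent to its correct image — which is exactly what must hold when $\widehat{z}$ is later specialized to the Liouville transformation without introducing singularities.
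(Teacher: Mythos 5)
Your proposal is correct and follows essentially the same route as the paper's proof: apply the chain-rule identity $d/dz = (1/\dot{z})\, d/d\widehat{z}$ to the canonical equation, multiply through by $\dot{z}$ to obtain the divergence form with weight $\dot{z}$, and transform the Robin conditions the same way. The only difference is that you make explicit the admissibility hypotheses (monotonicity and non-vanishing of $\dot{z}$) that the paper leaves implicit, which is a point of care rather than a different method.
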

\begin{proof}
It can be easily worked out using the fact that the differential operator $d/dz = \left(d\widehat{z}/dz \right) \, d/d\widehat{z} = (1/\dot{z}) \, d/d\widehat{z}$. Then, multiplying both sides with $\dot{z}$ yields the desired expression. The boundary conditions follow accordingly.
\end{proof}
\begin{lemma}	\label{lemma2}
The eigenvalue problem~\eqref{bvpfory} can be converted to another BVP in the following form by transforming the dependent variable $y$ of the form $y(z) = v(z) \, \widehat{y}(z)$, with $v(z)$ is a given function:
\begin{align*}
-\frac{d}{dz}\left(u v^2 \, \frac{d\widehat{y}}{dz} \right) - \frac{d}{dz}\left(u \frac{dv}{dz}\right) v \, \widehat{y} = \lambda v^2 \, \widehat{y},& \\
\alpha_0 \widehat{y}(z_0) - \alpha_1 \frac{d\widehat{y}}{dz}(z_0) = 0,& \qquad \qquad
 \beta_0 \widehat{y}(z_1) +  \beta_1 \frac{d\widehat{y}}{dz}(z_1) = 0,
\end{align*}
where
\begin{align*}
\alpha_0 &= a_0 v(z_0) - a_1 \frac{dv}{dz}(z_0), \qquad \qquad &\alpha_1 = a_1 v(z_0), \\
 \beta_0 &= b_0 v(z_0) + b_1 \frac{dv}{dz}(z_0), \qquad \qquad & \beta_1 = b_1 v(z_0).
\end{align*}
\end{lemma}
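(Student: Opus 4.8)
The plan is to substitute the factorization $y(z) = v(z)\,\widehat{y}(z)$ directly into the canonical Sturm--Liouville equation $-\frac{d}{dz}\!\left(u\,\frac{dy}{dz}\right) = \lambda y$ from~\eqref{bvpfory}, and then multiply the resulting identity through by $v$ in order to restore the self-adjoint (divergence) structure. Multiplying by $v$ is the decisive move: it produces the weight $v^2$ on the right-hand side, matching the claimed eigenvalue term $\lambda v^2 \widehat{y}$, and it is precisely what turns the non-symmetric operator obtained after the naive substitution back into a symmetric one expressible through the two divergence-form pieces in the statement. No appeal to Lemma~\ref{lemma1} is needed here, since we transform the dependent variable rather than the independent one.

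For the operator itself I would first apply the product rule to write $\frac{dy}{dz} = \frac{dv}{dz}\,\widehat{y} + v\,\frac{d\widehat{y}}{dz}$, so that $u\,\frac{dy}{dz} = u\frac{dv}{dz}\widehat{y} + uv\frac{d\widehat{y}}{dz}$. Differentiating once more and multiplying by $-v$ produces an expression that is linear in $\widehat{y}$, $\frac{d\widehat{y}}{dz}$, and $\frac{d^2\widehat{y}}{dz^2}$. The heart of the argument is then the purely algebraic check that this expression regroups exactly as $-\frac{d}{dz}\!\left(uv^2\,\frac{d\widehat{y}}{dz}\right) - \frac{d}{dz}\!\left(u\frac{dv}{dz}\right)v\,\widehat{y}$: expanding the two target divergences with the product rule and comparing coefficients shows that the coefficient of $\frac{d^2\widehat{y}}{dz^2}$ is $uv^2$ on both sides, the coefficients of $\frac{d\widehat{y}}{dz}$ agree as $u'v^2 + 2uvv'$, and the zeroth-order terms combine to $(u'v' + uv'')v$. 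This coefficient matching is the main obstacle, though it is bookkeeping rather than a conceptual difficulty; the delicate point is that there are \emph{two} separate contributions of the form $uv'\frac{d\widehat{y}}{dz}$ — one from differentiating $uv'\widehat{y}$ and one hidden inside $(uv)'\frac{d\widehat{y}}{dz}$ when differentiating $uv\frac{d\widehat{y}}{dz}$ — and both must be retained for the first-order coefficients to balance.

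Finally, the boundary conditions follow by substituting $y = v\widehat{y}$ and $\frac{dy}{dz} = \frac{dv}{dz}\widehat{y} + v\frac{d\widehat{y}}{dz}$ into the two Robin conditions and collecting terms at each endpoint. At $z_0$ the condition $a_0 y - a_1 \frac{dy}{dz} = 0$ becomes $\bigl[a_0 v(z_0) - a_1 \frac{dv}{dz}(z_0)\bigr]\widehat{y}(z_0) - a_1 v(z_0)\frac{d\widehat{y}}{dz}(z_0) = 0$, which reads off the stated $\alpha_0$ and $\alpha_1$; evaluating the condition at $z_1$ in exactly the same way gives the coefficient of $\widehat{y}$ as $\beta_0$ and that of $\frac{d\widehat{y}}{dz}$ as $\beta_1$. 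Because $v$ is a prescribed fixed function, each of these is a genuine constant, so the transformed problem again has Robin (third-kind) boundary conditions, and the lemma follows.
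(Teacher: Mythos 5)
Your proof is correct and takes essentially the same route as the paper's: substitute $y = v\,\widehat{y}$, apply the product rule, multiply through by $v$ to produce the weight $\lambda v^{2}\widehat{y}$, and verify that the operator regroups into the two divergence-form terms, with the boundary conditions obtained by direct substitution at the endpoints. The only cosmetic difference is that the paper spots the product-rule combination $\left(uv\,\widehat{y}\,'\right)'v + \left(uv\,\widehat{y}\,'\right)v' = \left(uv^{2}\,\widehat{y}\,'\right)'$ directly, whereas you verify the same identity by expanding both sides and matching the coefficients of $\widehat{y}$, $\widehat{y}\,'$, and $\widehat{y}\,''$ (and, in passing, your endpoint computation correctly evaluates $\beta_0$, $\beta_1$ at $z_1$, where the paper's statement has an apparent typo writing $z_0$).
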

\begin{proof}
We need to apply the chain rule multiple times:
\begin{align*}
\frac{dy}{dz} &= v' \widehat{y} + v \, \widehat{y}' \\
\frac{d}{dz} \left(u \frac{dy}{dz} \right) &= \left(u v \, \widehat{y}' \right)' + u v' \widehat{y}' + \left(u v'\right)' \, \widehat{y} \\
v \frac{d}{dz} \left(u \frac{dy}{dz} \right) &= \left(u v \, \widehat{y}' \right)' v + \left(u v \widehat{y}' \right) v' + \left(u v'\right)' v \, \widehat{y}.
\end{align*}
This last expression equals to the negative of the left-hand side of the desired ODE, and the first two terms can be combined into a single term by the chain rule. The right-hand side is straightforward and the boundary conditions follow accordingly by applying the product rule.
\end{proof}

The following theorem states the transformation of the Sturm-Liouville problem from the canonical form to the Liouville normal form.
\begin{theorem}
The Sturm-Liouville problem~\eqref{bvpfory} in the canonical form with eigenvalue $\lambda$ and the corresponding eigenfunction $y(z)$ can be converted into the following BVP in the \emph{Liouville normal} or \emph{Schr\"odinger form}
\begin{align}
-\frac{d^2\widehat{y}}{d\widehat{z}^2} + q(\widehat{z}) \, \widehat{y} = \lambda \, \widehat{y},& \qquad \qquad \widehat{z}_0 < \widehat{z} < \widehat{z}_1,		\label{odeyhat}	\\
\widehat{\alpha}_0 \widehat{y}(\widehat{z_0}) - \widehat{\alpha}_1 \frac{d\widehat{y}}{d\widehat{z}}(\widehat{z_0}) = 0, & \qquad \qquad 
\widehat{ \beta}_0 \widehat{y}(\widehat{z_1}) + \widehat{ \beta}_1 \frac{d\widehat{y}}{d\widehat{z}}(\widehat{z_1}) = 0,	\label{bcyhat}
\end{align}
by performing \emph{Liouville's transformation}
\begin{equation}
\widehat{z} = \int \frac{dz}{\sqrt{u}}, \qquad v = u^{-1/4}, \qquad \text{and} \qquad y(z) = v(z) \, \widehat{y}(z),		\label{liou-trans}
\end{equation}
where
\begin{align*}
\alpha_0 &= a_0 v(\widehat{z}_0) - \frac{a_1}{\dot{z}(\widehat{z}_0)} \frac{dv}{d\widehat{z}}(\widehat{z}_0), \qquad \qquad 
&\alpha_1 = a_1 \frac{v(\widehat{z_0})}{\dot{z}(\widehat{z_0})}, \\
 \beta_0 &= b_0 v(\widehat{z}_1) + \frac{b_1}{\dot{z}(\widehat{z_1})} \frac{dv}{d\widehat{z}}(\widehat{z}_1), \qquad \qquad 
&\beta_1 = b_1 \frac{v(\widehat{z_1})}{\dot{z}(\widehat{z_1})},
\end{align*}
and $q$ is the corresponding \emph{invariant function} of the canonical PDE, given by
\begin{equation*}
q(\widehat{z}) = v \frac{d^2}{d\widehat{z}^2} \left(\frac{1}{v}\right).
\end{equation*}
\end{theorem}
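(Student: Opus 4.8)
The plan is to reach the normal form by composing the two transformations of Lemmas~\ref{lemma1} and~\ref{lemma2} with the specific choices $\dot z=\sqrt u$ (equivalently $\widehat z=\int dz/\sqrt u$) and $v=u^{-1/4}$, and then to identify the resulting potential. First I would feed the canonical problem~\eqref{bvpfory} into Lemma~\ref{lemma1}. Since $\widehat z=\int dz/\sqrt u$ gives $\dot z=dz/d\widehat z=\sqrt u$, the coefficient $u/\dot z$ inside the derivative becomes $\sqrt u$ and the weight $\dot z$ on the right-hand side also becomes $\sqrt u$; thus the equation reads $-\frac{d}{d\widehat z}\left(\sqrt u\,\frac{dy}{d\widehat z}\right)=\lambda\,\sqrt u\,y$, with both the principal coefficient and the weight equal to $\sqrt u$.

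Next I would apply the dependent-variable substitution $y=v\,\widehat y$ of Lemma~\ref{lemma2} to this equation. Its proof is a purely formal chain-rule identity, so it carries over with $u$ replaced by $\sqrt u$ and $d/dz$ by $d/d\widehat z$; the only difference is that the right-hand weight is now $\sqrt u$ rather than $1$. Multiplying through by $v$ to symmetrize, exactly as in that proof, gives $-\frac{d}{d\widehat z}\left(\sqrt u\,v^2\,\frac{d\widehat y}{d\widehat z}\right)-v\,\frac{d}{d\widehat z}\left(\sqrt u\,\frac{dv}{d\widehat z}\right)\widehat y=\lambda\,\sqrt u\,v^2\,\widehat y$. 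The decisive observation is that the single choice $v=u^{-1/4}$ forces $\sqrt u\,v^2=1$, so it normalizes the principal coefficient and the weight to $1$ simultaneously; this is precisely why $\widehat z$ and $v$ are chosen as in~\eqref{liou-trans}. The equation then collapses to the normal form~\eqref{odeyhat} with $q=-v\,\frac{d}{d\widehat z}\left(\sqrt u\,\frac{dv}{d\widehat z}\right)$.

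It remains to recognize this $q$ as the stated invariant $v\,\frac{d^2}{d\widehat z^2}(1/v)$, which is a one-line computation: since $1/v^2=\sqrt u$, we have $\sqrt u\,\frac{dv}{d\widehat z}=v^{-2}\,\frac{dv}{d\widehat z}=-\frac{d}{d\widehat z}\left(\frac1v\right)$, and substituting yields $q=v\,\frac{d^2}{d\widehat z^2}\left(\frac1v\right)$. The boundary conditions follow by composing the coefficient modifications of the two lemmas: Lemma~\ref{lemma1} rescales $a_1$ and $b_1$ by $1/\dot z$ at the respective endpoints, and Lemma~\ref{lemma2} then mixes $v$ and $dv/d\widehat z$ into the coefficients, which together with the identity $\frac1{\dot z}\frac{dv}{d\widehat z}=\frac{dv}{dz}$ produce the stated $\alpha_0,\alpha_1,\beta_0,\beta_1$ in~\eqref{bcyhat}. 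The main obstacle here is bookkeeping rather than conceptual depth: one must keep the order of the two transformations straight and, in particular, remember that after Lemma~\ref{lemma1} the weight is $\sqrt u$ and not $1$, so Lemma~\ref{lemma2} is applied to the weighted equation; the only genuine verification, the equivalence of the two forms of $q$, reduces to the coincidence $\sqrt u\,v^2=1$.
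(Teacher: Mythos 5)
Your proposal is correct and follows essentially the same route as the paper: composing Lemmas~\ref{lemma1} and~\ref{lemma2} with the choices $\dot z=\sqrt u$, $v=u^{-1/4}$, observing that these make both the principal coefficient and the weight equal to $1$, and then verifying via $1/v^2=\sqrt u$ that the leftover potential term equals the invariant $q=v\,\frac{d^2}{d\widehat z^2}\bigl(\tfrac1v\bigr)$. The only (immaterial) difference is the order of composition --- you change the independent variable first and then substitute $y=v\,\widehat y$, whereas the paper's combined ODE substitutes first and transforms second --- and your bookkeeping of the boundary coefficients matches the paper's.
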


\begin{proof}
Combining both ODEs in Lemmas~\ref{lemma1} and~\ref{lemma2} yields the following transformed ODE:
\begin{equation}
-\frac{d}{d\widehat{z}} \left(\frac{u}{\dot{z}} v^2 \, \frac{d\widehat{y}}{d \widehat{z}} \right) - \left(u v'\right)' v \dot{z} \, \widehat{y} =  \lambda v^2 \dot{z} \, \widehat{y}.
\end{equation}
Since $d\widehat{z}/dz = 1/\dot{z} = \sqrt{u}$ and $v^2 = u^{-1/2}$, then $uv^2/\dot{z} = 1$. Similarly, $v^2 \dot{z} = u^{-1/2} \sqrt{u} = 1$. To show that $- \left(u v'\right)' v \dot{z} = q(\widehat{z})$, we observe that
\begin{equation*}
- v \dot{z} \left(u v'\right)' 
= - v \dot{z} \frac{d}{dz} \left(\frac{\sqrt{u}}{u^{-1/2}} v '\right) 
= - v \dot{z} \frac{d}{dz} \left(\frac{\dot{z}}{v^2} v' \right)
= v \dot{z} \frac{d}{d\widehat{z}} \left(-\frac{v'}{v^2}\right) 
= v \frac{d}{d\widehat{z}} \left[\dot{z} \frac{d}{dz}\left(\frac{1}{v}\right) \right] 
= v \frac{d^2}{d\widehat{z}^2} \left(\frac{1}{v}\right).
\end{equation*}
Since the last expression on the right-hand side equals to $q(\widehat{z})$, the transformation is thus correct. We obtain the corresponding boundary conditions by combining the ones from Lemmas~\ref{lemma1} and~\ref{lemma2}. The proof is complete.
\end{proof}
\begin{corollary}
Using the Liouville transformation~\eqref{liou-trans}, the perturbed potential temperature field~\eqref{steady-nonsteady} in the transformed variables can be expressed as follows:
\begin{equation}
\theta(\widehat{z},t) = \overline{\theta}(\widehat{z}) + \sum_{n = 0}^{\infty} a_n \, v(\widehat{z}) \, \widehat{y}_n(\widehat{z}) \, e^{-\lambda_n t},		\label{theta-zhat}
\end{equation}
where the steady-state solution $\overline{\theta}$~\eqref{steady-solution} and coefficients $a_n$ in~\eqref{coeff}  are given as follows, respectively:
\begin{align*}
\overline{\theta}(\widehat{z}) &= A \int_{\widehat{z}_0}^{\widehat{z}} \frac{d\widehat{\zeta}}{\sqrt{u(\widehat{\zeta})}} + B, \qquad \qquad A, B \in \mathbb{R}, \\
a_n &= \frac{\displaystyle \int_{\widehat{z}_0}^{\widehat{z}_1} \sqrt[4]{u(\widehat{z})} \, \widehat{\psi}(\widehat{z}) \, \widehat{y}_n(\widehat{z}) \, d\widehat{z}}{\displaystyle \int_{\widehat{z}_0}^{\widehat{z}_1} \widehat{y}_n^2(\widehat{z}) \, d\widehat{z}}, \qquad \qquad n \in \mathbb{N}_0.
\end{align*}
\end{corollary}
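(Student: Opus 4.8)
The plan is to apply the Liouville transformation~\eqref{liou-trans}, already justified in the preceding theorem, termwise to the solution formula~\eqref{steady-nonsteady}, rewriting each of its three ingredients---the steady-state integral $\overline{\theta}$, the eigenfunction series, and the coefficients $a_n$---in the new variable $\widehat{z}$. The one computational fact I would record at the outset is the Jacobian: differentiating $\widehat{z} = \int dz/\sqrt{u}$ gives $d\widehat{z}/dz = 1/\sqrt{u}$, hence $\dot{z} = dz/d\widehat{z} = \sqrt{u}$ and $dz = \sqrt{u}\,d\widehat{z}$. Combined with $v = u^{-1/4}$ and the substitution $y_n(z) = v(z)\,\widehat{y}_n(z)$, this reduces the whole proof to change-of-variable bookkeeping.

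For the steady-state term I would change variable in~\eqref{steady-solution}: replacing $d\zeta$ by $\sqrt{u}\,d\widehat{\zeta}$ sends $\int d\zeta/u(\zeta)$ to $\int \sqrt{u}\,d\widehat{\zeta}/u = \int d\widehat{\zeta}/\sqrt{u}$, which is the claimed $\overline{\theta}(\widehat{z})$. For the series the substitution is immediate: each $y_n(z)$ becomes $v(\widehat{z})\,\widehat{y}_n(\widehat{z})$, so $\sum a_n y_n e^{-\lambda_n t}$ turns into $\sum a_n v\,\widehat{y}_n e^{-\lambda_n t}$, exactly~\eqref{theta-zhat}.

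The coefficient formula~\eqref{coeff} is the only piece needing genuine attention. Substituting $y_n = v\widehat{y}_n$ and $dz = \sqrt{u}\,d\widehat{z}$ into the numerator and collapsing the powers of $u$ via $v\sqrt{u} = u^{-1/4}u^{1/2} = u^{1/4} = \sqrt[4]{u}$ yields the weighted integral $\int \sqrt[4]{u}\,\widehat{\psi}\,\widehat{y}_n\,d\widehat{z}$. The same substitution in the denominator gives $v^2\sqrt{u} = u^{-1/2}u^{1/2} = 1$, so the weight cancels and only $\int \widehat{y}_n^2\,d\widehat{z}$ remains.

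The main obstacle here is organizational rather than conceptual: one must carry the Jacobian factor $\sqrt{u}$ through each integral consistently and verify that the fractional powers of $u$ recombine into the stated weights $\sqrt[4]{u}$ and $1$. I would close with the observation that the cancellation in the denominator is structural, not accidental---since the Liouville normal form~\eqref{odeyhat} is a Sturm-Liouville problem with unit leading coefficient and unit weight, the transformed eigenfunctions $\{\widehat{y}_n\}$ are orthogonal with respect to plain Lebesgue measure $d\widehat{z}$, which is precisely what makes the transported coefficient formula well posed.
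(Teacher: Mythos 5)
Your proposal is correct and is exactly the argument the paper leaves implicit (the corollary is stated without proof as a direct consequence of the theorem): termwise substitution of the Liouville transformation with the Jacobian $dz = \sqrt{u}\,d\widehat{z}$, giving $\int d\zeta/u \mapsto \int d\widehat{\zeta}/\sqrt{u}$ for the steady state, $v\sqrt{u} = \sqrt[4]{u}$ in the numerator of $a_n$, and $v^2\sqrt{u} = 1$ in the denominator. Your closing remark that the unit weight in the denominator reflects orthogonality of the $\{\widehat{y}_n\}$ with respect to plain $d\widehat{z}$ in the Liouville normal form is a correct and worthwhile addition beyond what the paper records.
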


\section{Asymptotic solution}		\label{asymptotic}

\subsection{WKB approximation}

In this subsection, we proceed in seeking the corresponding WKB solution of the BVP~\eqref{odeyhat}--\eqref{bcyhat}. In that case, we need to assume that the term $q(\widehat{z})$ is a slowly varying in such a way that its fractional change over a vertical wavelength wavelength is much less than unity. Consequently, $Q(\widehat{z}) := q(\widehat{z}) - \lambda$ is also a slowly varying function, i.e., $Q(\widehat{z})$ changes by a small fraction in a distance $1/Q$. Under this assumption, the waves locally behaves like plane waves, as if $Q$ is constant. As previously, the upper dot symbol denotes the derivative with respect to the transformed vertical variable $\widehat{z}$ and we rewrite the ODE~\eqref{odeyhat} by including a singular perturbation parameter $0 < \varepsilon \ll 1$~\cite{pryce1993numerical}:
\begin{equation}
\varepsilon^{2} \ddot{\widehat{y}} = Q(\widehat{z}) \, \widehat{y}.		\label{eps-ODE}
\end{equation}

By introducing the Riccati variable $R(\widehat{z}) = \varepsilon \dot{\widehat{y}}/\widehat{y}$, we convert the linear second-order ODE~\eqref{eps-ODE} to a nonlinear first-order (Riccati) ODE in $R$:
\begin{equation}
\varepsilon \dot{R} + R^2 = Q(\widehat{z}).		\label{eps-Riccati}
\end{equation}
By writing $R$ as an asymptotic series
\begin{equation*}
R(\widehat{z}) = R_0(\widehat{z}) + \varepsilon R_1(\widehat{z}) + \varepsilon^2 R_2(\widehat{z}) + \cdots,
\end{equation*}
and substituting it to~\eqref{eps-Riccati} and collecting the coefficients of $\varepsilon^n$, $n \in \mathbb{N}$, we obtain a set of equations for $R_n$:
\begin{align}
{\cal O}(1):& \qquad R_0^2 = Q(\widehat{z})&		\label{eqn-R0}	\\
{\cal O}(\varepsilon):& \qquad \dot{R}_0 + 2 R_0 R_1 = 0& 	\label{eqn-R1}	\\
{\cal O}(\varepsilon^2):& \qquad \dot{R}_1 + R_1^2 + 2 R_0 R_2 = 0.& 	\label{eqn-R2}
\end{align} 

From the lowest-order equation~\eqref{eqn-R0}, we obtain
\begin{equation*}
R_0(\widehat{z}) = \pm \left\{ 
\begin{array}{ll}
k(\widehat{z}), 		\qquad \text{when} \quad  Q(\widehat{z}) > 0, \\
\omega(\widehat{z}), 	\qquad \text{when} \quad  Q(\widehat{z}) < 0,
\end{array}
\right.
\end{equation*}
where we have defined $k(\widehat{z}) = \sqrt{Q(\widehat{z})}$ for $Q(\widehat{z}) > 0$ and $\omega(\widehat{z}) = \sqrt{-Q(\widehat{z})}$ for $Q(\widehat{z}) < 0$. 
Furthermore, from the Ricatti transformation, we obtain a general WKB solution of $\widehat{y}$ in terms of the Riccati variable $R$, where up to the first two terms of expansion in $R$, given as follows:
\begin{equation}
\widehat{y}(\widehat{z}) = \widehat{y}_0 \, \text{exp} \left(\frac{1}{\varepsilon} \int_{\widehat{z}_0}^{\widehat{z}} R(\zeta) \, d\zeta \right) 
= \widehat{y}_0 \, \text{exp} \left(\frac{1}{\varepsilon} \int_{\widehat{z}_0}^{\widehat{z}} R_0(\zeta) \, d\zeta \right) \, \text{exp} \left(\int_{\widehat{z}_0}^{\widehat{z}} R_1(\zeta) \, d\zeta \right). 
\end{equation}
Meanwhile, from the first-order equation in $\varepsilon$~\eqref{eqn-R1}, we obtain the relationship
\begin{equation}
\text{exp} \left(\int_{\widehat{z}_0}^{\widehat{z}} R_1(\zeta) \, d\zeta \right) = \frac{1}{\sqrt{R_0(\widehat{z})}}.
\end{equation}

One may wish to continue by including an additional term $R_2$ that can obtained from~\eqref{eqn-R2}, but for the moment, it suffices just to include only $R_0$ and $R_1$ in our WKB approximation. By putting $\varepsilon = 1$, we then obtain exponential-type and oscillatory real-valued WKB solutions, given as follows respectively:
\begin{equation}
\widehat{y}(\widehat{z}) = \left\{
\begin{array}{ll}
{\displaystyle \frac{1}{\sqrt{k(\widehat{z})}}} \left[A_1 \, \text{exp}\left({\displaystyle \int_{\widehat{z}_0}^{\widehat{z}} k(\zeta) \, d\zeta}\right) + A_2 \, \text{exp} \left({\displaystyle -\int_{\widehat{z}_0}^{\widehat{z}} k(\zeta) \, d\zeta} \right) \right], & \text{for} \quad Q(\widehat{z}) > 0, \\ [0.5cm]
{\displaystyle \frac{1}{\sqrt{\omega(\widehat{z})}}} \left[B_1 \, \text{exp} \left({\displaystyle i\int_{\widehat{z}_0}^{\widehat{z}} \omega(\zeta) \, d\zeta} \right) + B_2 \, \text{exp} \left({\displaystyle -i\int_{\widehat{z}_0}^{\widehat{z}} \omega(\zeta) \, d\zeta} \right) \right], & \text{for} \quad Q(\widehat{z}) < 0,
\end{array}
\right.		\label{expo}
\end{equation}
where $A_1, A_2, B_1, B_2 \in \mathbb{C}$ or
\begin{equation}
\widehat{y}(\widehat{z}) = \left\{ 
\begin{array}{ll}
{\displaystyle \frac{1}{\sqrt{k(\widehat{z})}}} \left[\widehat{A}_1 \cosh \left({\displaystyle \int_{\widehat{z}_0}^{\widehat{z}} k(\zeta) \, d\zeta}\right) + \widehat{A}_2 \sinh \left({\displaystyle \int_{\widehat{z}_0}^{\widehat{z}} k(\zeta) \, d\zeta}\right) \right], &\text{for} \quad Q(\widehat{z}) > 0, \\ [0.5cm]
{\displaystyle \frac{1}{\sqrt{\omega(\widehat{z})}}} \left[\widehat{B}_1 \cos \left({\displaystyle \int_{\widehat{z}_0}^{\widehat{z}} \omega(\zeta) \, d\zeta} \right) + \widehat{B}_2 \sin \left({\displaystyle \int_{\widehat{z}_0}^{\widehat{z}} \omega(\zeta) \, d\zeta} \right) \right], &\text{for} \quad Q(\widehat{z}) < 0,
\end{array}
\right.		\label{trigo}
\end{equation}
where $\widehat{A}_1, \widehat{A}_2, \widehat{B}_1, \widehat{B}_2 \in \mathbb{R}$.

\subsection{Turning point}		\label{turningpoint}

The above WKB approximations~\eqref{expo} or~\eqref{trigo} are valid as long as $Q(\widehat{z}) \neq 0$. A point $\widehat{z} = \zeta_c$ where $q(\zeta_c) = \lambda$ but $\dot{q}(\zeta_c) \neq 0$ is called a (simple) \emph{turning point} of the ODE~\eqref{odeyhat}. Similar as previously, we parameterize the ODE as in~\eqref{eps-ODE}, where $\varepsilon$ is again a small parameter and $\lambda$ is a fixed eigenvalue:
\begin{equation}
-\varepsilon^2 \, \ddot{\widehat{y}} +  \left[q(\widehat{z}) - \lambda \right] \widehat{y} = 0.	\label{eps-tp}
\end{equation}
We consider a particular case of a single turning point at $\widehat{z} = \zeta_c$ with $\dot{q}(\zeta_c) > 0$. In this case, the WKB solutions are oscillating to the left of $\widehat{z} = \zeta_c$ while take exponentially decaying behavior to the right of $\widehat{z} = \zeta_c$. Now, introduce the change of variable $\delta \, t = \widehat{z} - \zeta_c$, where $\delta^3 = \varepsilon^2/\dot{q}(\zeta_c)$. The ODE~\eqref{eps-tp} becomes
\begin{equation}
-\frac{d^2\widehat{y}}{dt^2} + \left(\frac{q(\zeta_c + \delta t) - \lambda}{\delta \, \dot{q}(\zeta_c)}\right) \widehat{y} = 0.	\label{delta-tp}
\end{equation}
As $\varepsilon \to 0$, $\delta \to 0$, and by Taylor-expanding $q$ about the turning point, i.e., $q(\zeta_c + \delta t) = q(\zeta_c) + \dot{q}(\zeta_c) \delta t + {\cal O}(\delta^2)$, the ODE~\eqref{delta-tp} approximates to the Airy (or Stokes) ODE:
\begin{equation}
\frac{d^2\widehat{y}}{dt^2} = t \widehat{y}.		\label{airy}
\end{equation}

The solution of Airy equation~\eqref{airy} is given by a linear combinations of Airy functions:
\begin{equation}
\widehat{y}(t) = C_1 \, \text{Ai}(t) + C_2 \, \text{Bi}(t),	\label{airy-sol}
\end{equation}
where $C_1, C_2 \in \mathbb{C}$, and Ai$(t)$ and Bi$(t)$ are the Airy function of the first and second kinds, respectively. They can be defined by the improper Riemann integral and their convergence can be verified using Dirichlet's test:
\begin{align*}
\text{Ai}(t) &= \frac{1}{\pi} \int_{0}^{\infty} \cos \left(\frac{x^3}{3} + t x \right) \, dx, \\
\text{Bi}(t) &= \frac{1}{\pi} \int_{0}^{\infty} \left[\text{exp} \left(-\frac{x^3}{3} + t x \right) + \sin \left(\frac{x^3}{3} + t x \right) \right] \, dx.
\end{align*}
These functions are also called \emph{connection} or \emph{turning point functions} since they change their characteristics from exponential decay when $t \to +\infty$ to oscillation when $t \to -\infty$. As we will observe very soon, they also connect or ``bridge'' between the two different natures of the WKB solutions~\eqref{expo}.

The asymptotic expansions of Ai$(t)$ and Bi$(t)$ for $t \to \pm \infty$ can be computed either by extending them to the complex plane or directly using the method of stationary phase from their corresponding Riemann integral representations~\cite{schleich2001quantum}. We provide the results here~\cite{abramowitz1964handbook}. As $t \to \infty$
\begin{align}
\text{Ai}(t) \sim \frac{1}{2 \sqrt{\pi} t^{1/4}} e^{-\frac{2}{3} t^{3/2}} \qquad \qquad \text{and} \qquad \qquad
\text{Bi}(t) \sim \frac{1}{  \sqrt{\pi} t^{1/4}} e^{ \frac{2}{3} t^{3/2}},
\end{align}
and as $t \to -\infty$
\begin{align}
\text{Ai}(t) \sim \frac{1}{\sqrt{\pi} |t|^{1/4}} \sin \left(\frac{2}{3} |t|^{3/2} + \frac{\pi}{4} \right) \qquad \qquad \text{and} \qquad \qquad
\text{Bi}(t) \sim \frac{1}{\sqrt{\pi} |t|^{1/4}} \cos \left(\frac{2}{3} |t|^{3/2} + \frac{\pi}{4} \right).
\end{align}

We write the WKB solutions~\eqref{expo} and the solution of the Airy ODE~\eqref{airy-sol} in relation to the regions where a turning point is located, where the subscripts $L$, $C$, and $R$ denote left, center (transition), and right of the turning point $\zeta_c$, respectively:
\begin{align*}
\widehat{y}_L(\widehat{z}) &= \frac{1}{\sqrt{\omega(\widehat{z})}} \left[B_1 \, \text{exp} \left({\displaystyle i\int_{\widehat{z}_0}^{\widehat{z}} \omega(\zeta) \, d\zeta} \right) + B_2 \, \text{exp} \left({\displaystyle -i\int_{\widehat{z}_0}^{\widehat{z}} \omega(\zeta) \, d\zeta} \right) \right], \\
\widehat{y}_C(\widehat{z}) &= C_1 \, \text{Ai}\left(\frac{\widehat{z} - \zeta_c}{\delta} \right) + C_2 \, \text{Bi}\left(\frac{\widehat{z} - \zeta_c}{\delta} \right),  \\
\widehat{y}_R(\widehat{z}) &= \frac{1}{\sqrt{k(\widehat{z})}} \left[A_1 \, \text{exp}\left({\displaystyle \int_{\widehat{z}_0}^{\widehat{z}} k(\zeta) \, d\zeta}\right) + A_2 \, \text{exp} \left({\displaystyle -\int_{\widehat{z}_0}^{\widehat{z}} k(\zeta) \, d\zeta} \right) \right].
\end{align*}
If we choose the constants appropriately, then the above solutions match smoothly from one region to another. By taking the outer limits of the inner (Airy) solution and the inner limits of the outer (WKB) solutions, we obtain
\begin{align*}
\widehat{y}_L(\widehat{z}) &\sim \left[\dot{q}(\zeta_c) \left(\zeta_c - \widehat{z}\right)\right]^{-1/4} \left(B_1 e^{-i\theta} + B_2 e^{i\theta} \right), & \text{as} \; \widehat{z} \to \zeta_c^{-}, \\
\widehat{y}_C(\widehat{z}) &\sim \frac{\sqrt[6]{\varepsilon}}{\sqrt{\pi}} \left[\sqrt[3]{\dot{q}(c)} \left(\zeta_c - \widehat{z}\right)\right]^{-1/4} 
\left[C_1 \sin\left(\theta + \frac{\pi}{4} \right) + C_2 \cos \left(\theta + \frac{\pi}{4} \right)\right],
& \qquad \text{as} \; \widehat{z} - \zeta_c \to -\infty, \\
\widehat{y}_C(\widehat{z}) &\sim \frac{\sqrt[6]{\varepsilon}}{\sqrt{\pi}} \left[\sqrt[3]{\dot{q}(\zeta_c)} \left(\widehat{z} - \zeta_c \right)\right]^{-1/4} \left(\frac{C_1}{2} e^{-\theta} + C_2 e^{\theta} \right), & \text{as} \; \widehat{z} - \zeta_c \to +\infty, \\
\widehat{y}_R(\widehat{z}) &\sim \left[\dot{q}(\zeta_c) \left(\widehat{z} - \zeta_c\right)\right]^{-1/4} \left(A_1 e^{ \theta} + A_2 e^{-\theta} \right), & \text{as} \; \widehat{z} \to \zeta_c^{+},
\end{align*}
where $\theta = \frac{2}{3\varepsilon} \sqrt{\dot{q}(\zeta_c) |\widehat{z} - \zeta_c|^3}$. These expressions match smoothly as long as the following conditions are satisfied (see also~\cite{ablowitz2003complex}):
\begin{align*}
A_1 &= \frac{C_2}{\sqrt{\pi}} \sqrt[6]{\varepsilon \, \dot{q}(\zeta_c)}, \qquad \qquad
&B_1 = \frac{\sqrt[6]{\varepsilon \, \dot{q}(\zeta_c)}}{2 \sqrt{\pi}} \left(C_1 e^{i \frac{\pi}{4}} + C_2 e^{-i \frac{\pi}{4}} \right), \\
A_2 &= \frac{C_1}{2 \sqrt{\pi}} \sqrt[6]{\varepsilon \, \dot{q}(\zeta_c)}, \qquad \qquad
&B_2 = \frac{\sqrt[6]{\varepsilon \, \dot{q}(\zeta_c)}}{2 \sqrt{\pi}} \left(C_1 e^{-i \frac{\pi}{4}} + C_2 e^{i \frac{\pi}{4}} \right).
\end{align*}

\subsection{Eigenvalue estimate}

There are several ways in estimating the eigenvalues. The first approach is based on the geometrical optics approximation of the WKB technique and adopts an assumption that $\lambda - q(\widehat{z}) > 0$ on $[\widehat{z}_0,\widehat{z}_1]$. The $k$th eigenvalue $\lambda_k$ is the root of the equation~\cite{bailey2001sleighn2}:
\begin{equation}
\int_{\widehat{z}_0}^{\widehat{z}_1} \sqrt{\lambda_k - q(\zeta)} \, d\zeta = (k + 1) \pi, \qquad k \in \mathbb{N}.
\end{equation}
The second way is appropriate for problems with two turning points, known as the classical approximation quantization formula. The $k$th eigenvalue $\lambda_k$ is approximated by the root of
\begin{equation}
\int_{\zeta_{c_1}}^{\zeta{c_2}} \sqrt{\lambda_k - q(\zeta)} \, d\zeta = \left(k + \frac{1}{2} \right) \pi, \qquad k \in \mathbb{N},
\end{equation}
where $\zeta_{c_1} < \zeta_{c_2}$ are the two turning points inside the interval for which $\lambda - q(\widehat{z}) \geq 0$ and to be found as part of the process~\cite{pryce1993numerical}. However, it is unclear how to obtain three unknowns with this single equation, and the author did not elaborate further on this matter.

A new technique for approximating the eigenvalues and the subregions that support the corresponding localized eigenfunctions was proposed by Arnorld et al.~\cite{arnold2019computing}. The approach is based on theoretical propositions of the localization landscape and effective potential proposed by Filoche and Mayboroda~\cite{filoche2016universal}. The authors introduced a \emph{landscape function} $\widehat{v}$ that satisfies the following ODE
\begin{equation}
-\frac{d^2\widehat{v}}{d\widehat{z}^2} + q(\widehat{z}) \widehat{v} = 1  
\end{equation}
with appropriate boundary conditions. Furthermore, an \emph{effective potential} $\widehat{V}$ is defined as the reciprocal of the landscape function, i.e., $\widehat{V} = 1/\widehat{v}$. The lowest eigenvalue of the Sturm-Liouville problem is given by
\begin{equation}
\lambda_0 \approx \frac{5}{4} \widehat{V}_\text{min}
\end{equation}
where $\widehat{V}_\text{min}$ is the minimum value of the effective potential on a subdomain where the corresponding eigenfunction is localized.

\section{Numerical simulation}		\label{numerical}

In this section, we consider a regular Sturm-Liouville problem with Dirichlet boundary conditions and solve the Liouville normal form numerically. We implemented a fourth-order method of boundary value problem solver \texttt{bvp4c} from \emph{Matlab} version R2021b. This solver implements the three-stage Lobatto-IIIA formula in the finite difference technique. The formula implement the collocation technique that utilizes a mesh of points in dividing the interval of integration into subintervals. The corresponding collocation polynomial then provides continuous not only solutions but also its first derivative with uniform fourth-order accuracy within the interval of integration~\cite{kierzenka2001abvp,shampine2003solving,jay2015lobatto}.
\begin{figure}[h]
\begin{center}
\includegraphics*[width=0.45\textwidth]{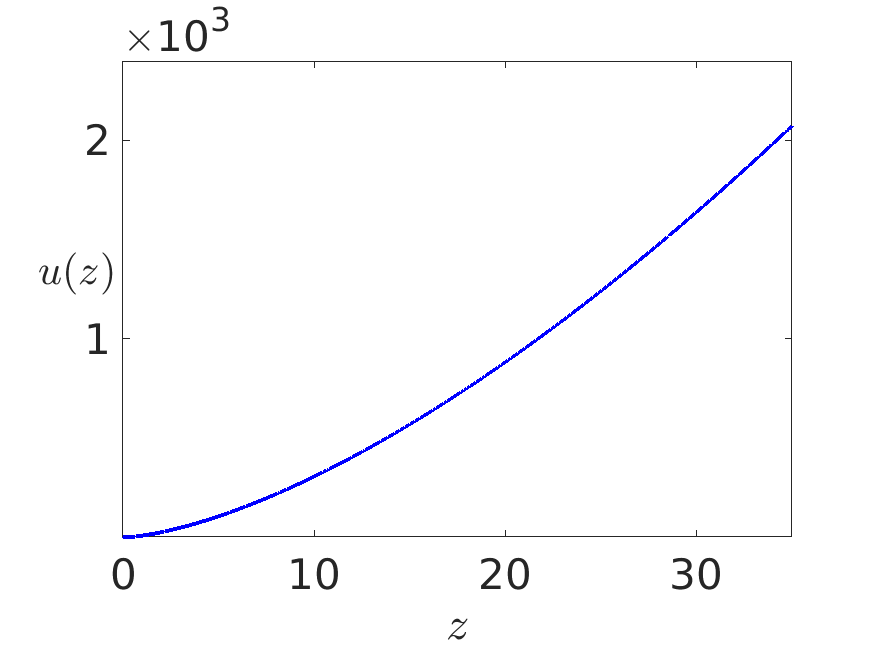} \hspace*{0.5cm}
\includegraphics*[width=0.45\textwidth]{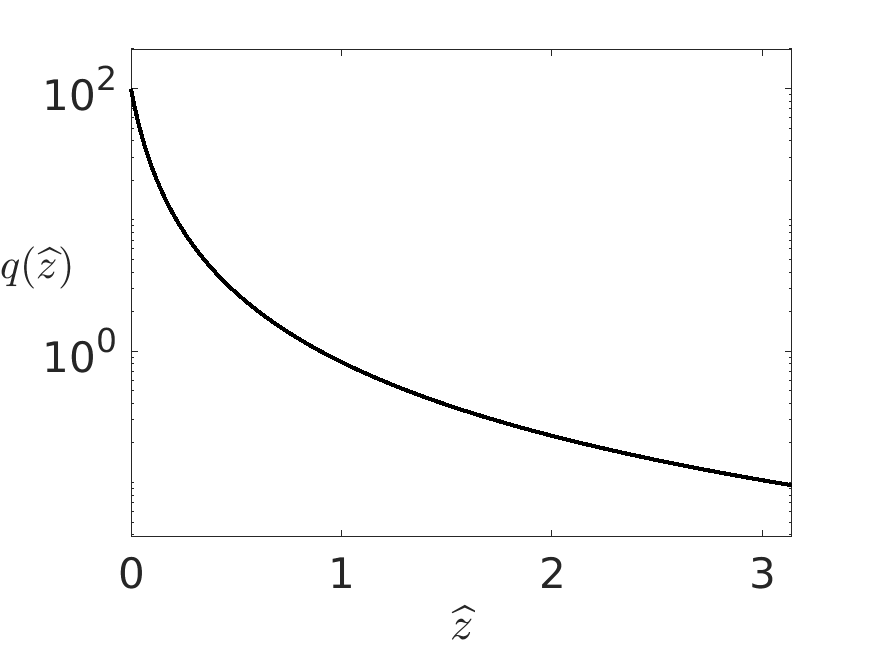}
\end{center}
\caption{Plots of the (a) wind speed profile $u(z)$ and (b) invariant function $q(\widehat{z})$ in logarithmic scale.}	\label{fig1-paine}
\end{figure}

Consider the following regular Sturm-Liouville problem with an exponential-type horizontal mean wind speed profile $u(z)$ and Dirichlet boundary conditions: 
\begin{equation*}
-\frac{d}{dz} \left\{ \left[\left(2 + \sqrt{5} \right)(z - d)\right]^{2(3 - \sqrt{5})} \frac{dy}{dz} \right\} = \lambda y, \qquad \qquad y(0) = 0 = y(z_1). 
\end{equation*}
where $d = -(0.1)^{2 + \sqrt{5}}/(2 + \sqrt{5}) \approx 1.37 \times 10^{-5}$ and $z_1 = (\pi + 0.1)^{2 + \sqrt{5}}/(2 + \sqrt{5}) + d \approx 34.4068$. By implementing the Liouville transformation, the corresponding BVP in the Liouville normal form reduces to the well-known second Paine-de Hoog-Anderssen (PdHA) problem, given as follows~\cite{paine1981correction}:
\begin{equation*}
-\frac{d^2\widehat{y}}{d\widehat{z}^2} + \frac{1}{(\widehat{z} + \frac{1}{10})^2} \widehat{y} = \lambda \widehat{y}, \qquad \widehat{y}(0) = 0 = \widehat{y}(\pi).
\end{equation*}
Figure~\ref{fig1-paine} displays the horizontal mean wind speed profile $u(z)$ and the corresponding invariant function $q(\widehat{z})$. The latter is plotted in logarithmic scale.

It turns out that this problem exhibits a single turning point at $\zeta_c = 0.7111$ but $\dot{q}(\zeta_c) = -3.7481 < 0$ instead of positive as we discussed earlier in Subsection~\ref{turningpoint}. Thus, the corresponding WKB solution will have the opposite characteristics from the one we in Section~\ref{asymptotic}, i.e., exponentially decaying to the left of $\zeta_c$ and oscillating to the right of $\zeta_c$. Since there are several families of solution, we make further restriction by imposing the condition $\dot{\widehat{y}}(0) = 1$ to obtain a unique solution. As previously, the subscript $L$, $C$, and $R$ denote the region on the left-hand side, the center (or around), and the right-hand side of the turning point $\zeta_c$. It is given explicitly as follows:
\begin{align*}
\widehat{y}_L(\widehat{z}) &= \frac{1}{\sqrt{k(0) k(\widehat{z})}} \sinh \left(\int_{0}^{\widehat{z}} k(\zeta) \, d\zeta \right), \\
\widehat{y}_C(\widehat{z}) &= C_1 \, \text{Ai} \left(\frac{\zeta_c - \widehat{z}}{\delta} \right) + C_2 \, \text{Bi} \left(\frac{\zeta_c - \widehat{z}}{\delta} \right),\\
\widehat{y}_R(\widehat{z}) &= \frac{2B_1}{\sqrt{\omega(\widehat{z})}} e^{i \left(\frac{\pi}{2} + \phi \right)} \sin \left(\int_{0}^{\widehat{z}} \omega(\zeta) \, d\zeta + \phi \right),
\end{align*}
where $B_1 \in \mathbb{C}$, the phase $\phi$, and the coefficients $C_1$ and $C_2$ are given by
\begin{align*}
\phi &= \int_{0}^{\pi} \omega(\zeta) \, d\zeta, \qquad \qquad 
C_1   = \frac{2B_1 \sqrt{\pi}}{\sqrt[6]{\varepsilon |\dot{q}(\zeta_c)| }}, \qquad \qquad
C_2   = -\frac{1}{2} \sqrt{ \frac{\pi}{k(0)} } \left(\varepsilon |\dot{q}(\zeta_c)| \right)^{-1/6}.
\end{align*}

Figure~\ref{fig2-wkb-numeric} displays the plots of eigenfunction for the second PdHA problem corresponding to the lowest eigenvalue. The left panel is obtained from the WKB approximation while the right panel is obtained numerically with an additional boundary condition $\dot{\widehat{y}}(0) = 1$. 
\begin{figure}[h]
\begin{center}
\includegraphics*[width=0.45\textwidth]{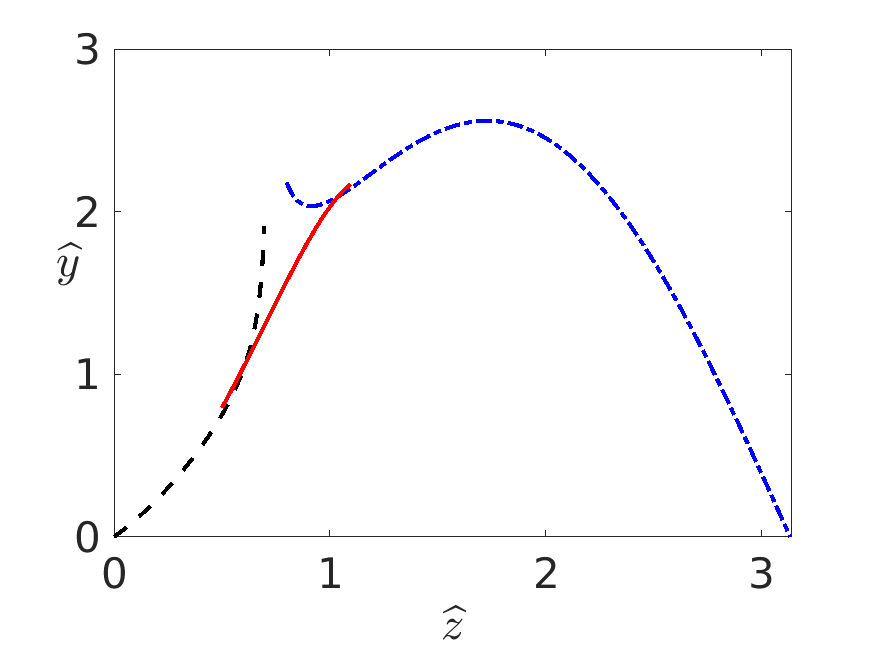} \hspace*{0.5cm}
\includegraphics*[width=0.45\textwidth]{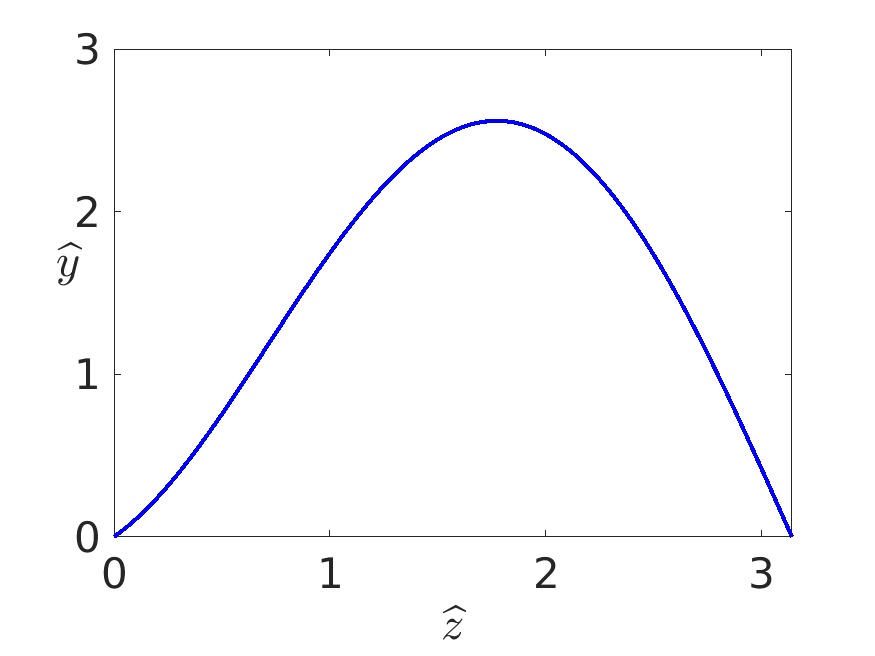}
\end{center}
\caption{(a) The WKB approximation of the eigenfunction for the lowest eigenvalue for the second PdHA BVP. The exponential profile (dashed black) and the sinusoidal solution (dash dotted blue) are connected by the Airy function (solid red) where there is a turning point at $\zeta_c = 0.7111$. For this particular plot, we have taken $\delta = 0.45$ and $|B_1| = 2.6944$. (b) An obtained numerically plot of the lowest eigenfunction $\widehat{y}_0(\widehat{z})$ with the associated eigenvalue $\lambda_0 \approx 1.520$ and maximum value of $2.558$.}	\label{fig2-wkb-numeric}
\end{figure}

It is interesting to note that this second PdHA problem is one of the few BVPs where estimating the lowest eigenvalue $\lambda_0$ can be performed almost straightforwardly. This is due to the availability of an analytical solution for the landscape function $\widehat{v}$, where the corresponding BVP is given as follows:
\begin{equation*}
-\frac{d^2\widehat{v}}{d\widehat{z}^2} + \frac{\widehat{v}}{(\widehat{z} + \frac{1}{10})^2} = 1, \qquad \widehat{v}(0) = 0 = \widehat{v}(\pi).
\end{equation*}
It admits an exact solution in the following form:
\begin{equation*}
v(\widehat{z}) = C_1 (10 \widehat{z} + 1)^\varphi + C_2 (10 \widehat{z} + 1)^{1 - \varphi} - \widehat{z}^2 - \frac{\widehat{z}}{5} - \frac{1}{100}.
\end{equation*}
where $\varphi$ and $1 - \varphi$ are the golden ratio and the negative of silver ratio, respectively, and
\begin{align*}
C_1 &= \frac{1}{100} \frac{(10 \pi + 1)^2 - (10 \pi + 1)^{1 - \varphi}}{(10 \pi + 1)^\varphi - (10 \pi + 1)^{1 - \varphi}}, \qquad \qquad
&\varphi &= \frac{1}{2}(1 + \sqrt{5}), \\
C_2 &= \frac{1}{100} \frac{(10 \pi + 1)^\varphi - (10 \pi + 1)^2}{(10 \pi + 1)^\varphi - (10 \pi + 1)^{1 - \varphi}}, \qquad \qquad 
&1 - \varphi &= \frac{1}{2}(1 - \sqrt{5}) = -\frac{1}{\varphi}.
\end{align*}
In this example, we have $v_\text{max} \approx 0.8145$, $V_\text{min} \approx 1.2277$, and thus $\lambda_0 \approx 1.5347$. Using the geometrical optics approximation of the WKB method, the approximated eigenvalue gives $\lambda_0 \approx 1.5282$ while the numerical result gives $\lambda_0 \approx 1.5198$.

The left panel of Figure~\ref{fig3-landscape} displays plots of the landscape function $\widehat{v}(\widehat{z})$, its effective potential $\widehat{V}(\widehat{z})$, and the approximated lowest-order eigenvalue $\lambda_0$. The right panel of Figure~\ref{fig3-landscape} is similar to the right panel of Figure~\ref{fig1-paine}, but not it is plotted in a regular scale. The lowest-order eigenvalue obtained from numerical simulation is also depicted for a comparison.
\begin{figure}[h]
\begin{center}
\includegraphics*[width=0.45\textwidth]{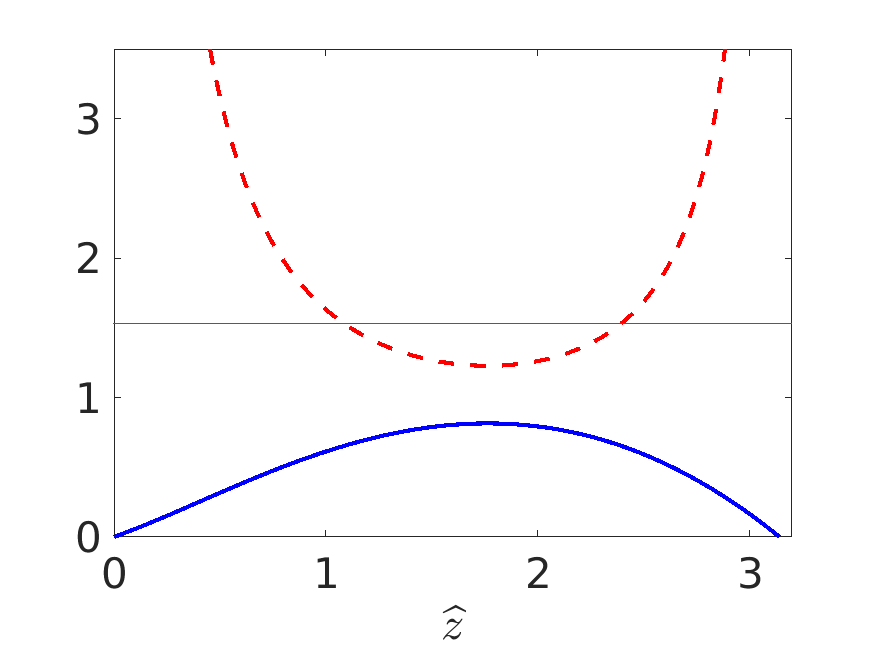} \hspace*{0.5cm}
\includegraphics*[width=0.45\textwidth]{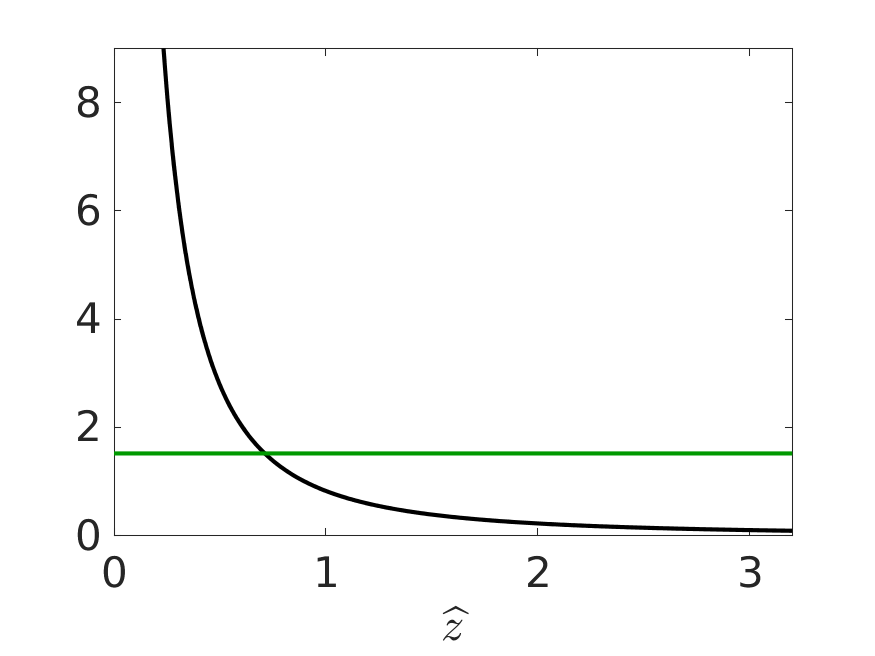}
\end{center}	
\caption{(a) The plot of a landscape function $\widehat{v}(\widehat{z})$ (solid blue), effective potential $\widehat{V}(\widehat{z})$ (dashed red), and the approximated eigenvalue $\lambda_0 \approx 1.5347$ (thin green). (b) The plot of $q(\widehat{z}) = 1/\left(\widehat{z} + 0.1\right)^2$ (black) and the numerically obtained eigenvalue $\lambda_0 = 1.5198$ (green).}		\label{fig3-landscape}
\end{figure}

By considering the original IVBP for the perturbed potential temperature field for atmospheric boundary layer, i.e., the model~\eqref{pde-theta}--
\eqref{bc2-theta},  we can acquire both the steady-state and non-steady-state solutions of $\theta$. Since $u(z) = \left[\left(2 + \sqrt{5} \right)(z - d)\right]^{2(3 - \sqrt{5})}$, it follows that $u(\widehat{z}) = (\widehat{z} + 0.1)^{2(1 + \sqrt{5})}$. The steady-state solution $\overline{\theta}(\widehat{z})$ can be found instantly
\begin{equation*}
\overline{\theta}(\widehat{z}) = A \int_{0}^{\widehat{z}} (\widehat{\zeta} + 0.1)^{-(1 + \sqrt{5})} \, d\widehat{\zeta} + B 
= \frac{c_2 - c_1}{1 - (10 \pi + 1)^{-\sqrt{5}}} \left(1 - \frac{1}{(10 \widehat{z} + 1)^{\sqrt{5}}} \right) + c_1,
\end{equation*}

Figure~\ref{fig4-3d} displays the three-dimensional plots of the perturbed potential temperature $\theta(\widehat{z},t)$, i.e., expression~\eqref{theta-zhat}, for relatively ``large'' value of $t$, where we consider the lowest-value of eigenvalue $\lambda_0$ only. See the third point of Corollary~\ref{corol}. The left and right panels illustrate the field in the absence and presence of the steady-state solution, i.e., $\widehat{\theta}$ and $\overline{\theta} + \widehat{\theta}$, respectively, albeit with a different angle point of view. In both instances, the perturbed potential temperature profiles exhibit exponentially decaying behavior as $t \to \infty$ due to the positive value of the lowest eigenvalue $\lambda_0$.
\begin{figure}[h]
\begin{center}
\includegraphics[width=0.45\textwidth]{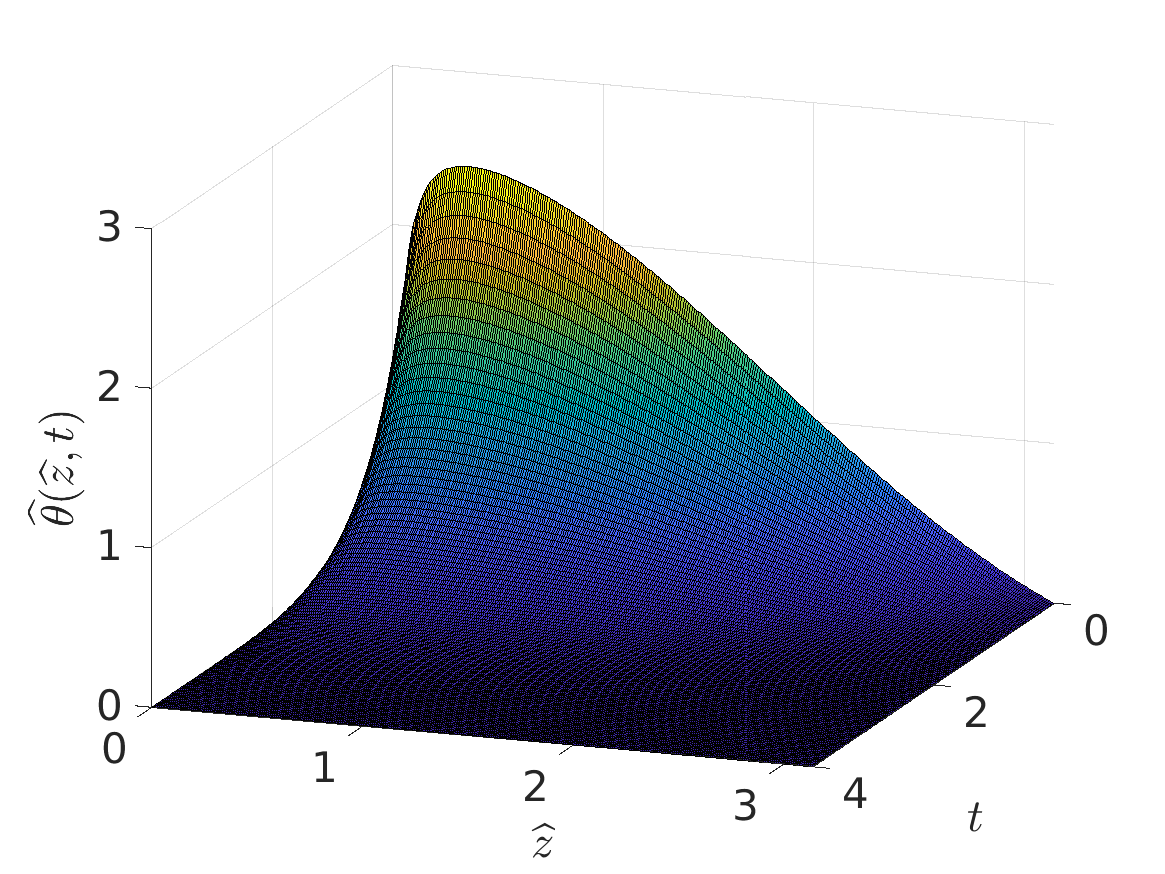}	\hspace*{0.5cm}
\includegraphics[width=0.45\textwidth]{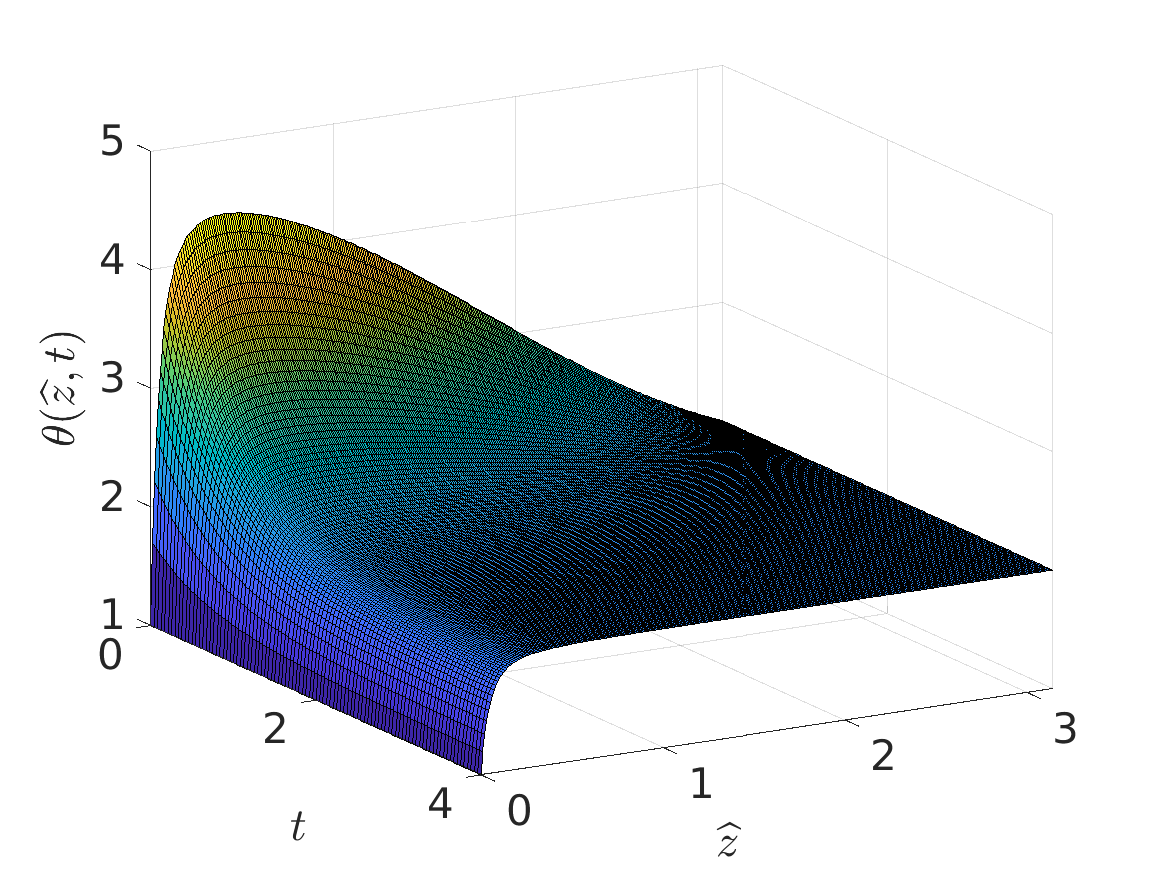}
\end{center}
\caption{Three-dimensional plots of the perturbed potential temperature $\theta(\widehat{z},t)$, (a) in the absence of the steady-state solution and (b) in the presence of the steady-state solution. In both plots, we have taken $v(\widehat{z}) = (\widehat{z} + 0.1)^{-\varphi}$, $\psi(\widehat{z}) = 1 + \widehat{z}(\widehat{z} - 2h)/[\pi(\pi - 2h)]$, $h = 1.8$, $c_1 = 1$, $c_2 = 2$, and $a_0 = 1.3808$.}			\label{fig4-3d}
\end{figure}

\section{Conclusion}		\label{conclusion}

In this article, we have considered a mathematical model for the perturbed potential temperature field in atmospheric boundary layers. The governing equation is formulated as an initial boundary value problem, and the spatial-dependent quantity obeys a regular Sturm-Liouville eigenvalue problem. We have discussed asymptotic solutions of the corresponding boundary value problem in the Liouville normal form through the WKB approximation. Furthermore, by implementing a fourth-order method solver, we also solve the problem numerically. By comparing the approximated eigenvalues obtained through the landscape function and effective potential with the ones from numerical simulations, we observed that both outcomes are reliable and remarkably in full agreement. The eigenfunctions obtained from the WKB method and numerical simulation also exhibit good qualitative behavior even though the interval of interest contains a simple turning point.


\begin{thebibliography}{99}
\bibitem{owinoh2005effects} Owinoh, A. Z., Hunt, J. C., Orr, A., Clark, P., Klein, R., Fernando, H. J. S., and Nieuwstadt, F. T. (2005). Effects of changing surface heat flux on atmospheric boundary-layer flow over flat terrain. \textit{Boundary-Layer Meteorology} {\bfseries 116}(2): 331-361.
	
\bibitem{haberman2013applied} Haberman, R. (2013) \textit{Applied Partial Differential Equations with Fourier Series and Boundary Value Problems}, Fifth edition. Pearson Education: Boston, Massachusetts, US.
	
\bibitem{agarwal2009ordinary} Agarwal, R. P., and O'Regan, D. (2009). \textit{Ordinary and Partial Differential Equations: With Special Functions, Fourier Series, and Boundary Value Problems}. Springer Science \& Business Media: New York, US.	
	
\bibitem{pryce1993numerical} Pryce, J. D. (1993). \textit{Numerical Solution of Sturm-Liouville Problems}. Oxford University Press: Oxford, UK.
	
\bibitem{schleich2001quantum} Schleich, W. P. (2001). \textit{Quantum Optics in Phase Space}. John Wiley \& Sons: Hoboken, New Jersey, US.

\bibitem{abramowitz1964handbook} Abramowitz, M. and Stegun, I. A. (1964). \textit{Handbook of Mathematical Functions with Formulas, Graphs, and Mathematical Tables}. Dover: New York, US.
	
\bibitem{ablowitz2003complex} Ablowitz, M. J. and Fokas, A. S. (2003). \textit{Complex Variables: Introduction and Applications}, Second edition. Cambridge University Press: Cambridge, UK. There are at least two typos on page 503 of this book. First, it deals with $y_C(x)$. Each exponential term of the second expression inside the bracket that follows both $C_1$ and $C_2$ should be written as $e^{-i\theta-i\pi/4}$ instead of $e^{i\theta-\pi/4}$ in both instances. Second, it concerns with $y_L(x)$. Both exponential terms are missing the purely imaginary number $i$, i.e., they should be written as $e^{-i\theta}$ and $e^{i\theta}$ instead of $e^{-\theta}$ and $e^{\theta}$, respectively.

\bibitem{bailey2001sleighn2} Bailey, P. B., Everitt, W. N., and Zettl, A. (2001). The SLEIGN2 Sturm-Liouville code. \textit{ACM Transaction on Mathematical Software} {\bfseries 27}(2): 143--192. 
	
\bibitem{arnold2019computing} Arnold, D. N., David, G., Filoche, M., Jerison, D., and Mayboroda, S. (2019). Computing spectra without solving eigenvalue problems. \textit{SIAM Journal on Scientific Computing} {\bfseries 41}(1): B69--B92.
	
\bibitem{filoche2016universal} Filoche, M. and Mayboroda, S. (2012). Universal mechanism for Anderson and weak localization. \textit{Proceedings of the National Academy of Sciences} {\bfseries 109}(37): 14761--14766.

\bibitem{kierzenka2001abvp} Kierzenka, J. and Shampine, L. F. (2001). A BVP solver based on residual control and the Maltab PSE. \textit{ACM Transactions on Mathematical Software (TOMS)} {\bfseries 27}(3): 299--316.

\bibitem{shampine2003solving} Shampine, L. F., Shampine, L. F., Gladwell, I., and Thompson, S. (2003). \textit{Solving ODEs with Matlab}. Cambridge University Press: Cambridge, UK.

\bibitem{jay2015lobatto} Jay L. O. (2015) Lobatto methods. In Engquist, B. (Ed.) \textit{Encyclopedia of Applied and Computational Mathematics}. Springer: Berlin, Heidelberg, Germany. Accessible online at \url{https://doi.org/10.1007/978-3-540-70529-1_123}. Last accessed \today.
	
\bibitem{paine1981correction} Paine, J. W., de Hoog, F. R., and Anderssen, R. S. (1981). On the correction of finite difference eigenvalue approximations for Sturm-Liouville problems. \textit{Computing} {\bfseries 26}(2): 123--139.
\end{thebibliography}
\end{document}